\def\Z{\mathbb {Z}}
\def\N{\mathbb {N}}
\def\Q{\mathbb {Q}}
\newtheorem{theorem}{Theorem}
\newtheorem{corollary}{Corollary}
\newtheorem{lemma}{Lemma}
\newtheorem{definition}{Definition}
\title {On automatic subsets of the Gaussian integers}
\author{
Wieb Bosma \\
Radboud University\\
Heyendaalseweg 135\\
6525 AJ Nijmegen\\
Netherlands \\
\href{mailto:bosma@math.ru.nl}{\tt bosma@math.ru.nl}
\and
Robbert Fokkink and Thijmen Krebs \\
TU Delft\\
Mekelweg 4\\
2628 CD Delft\\
Netherlands \\
\href{mailto:r.j.fokkink@tudelft.nl}{\tt r.j.fokkink@tudelft.nl} \\
}
\begin{document}
\maketitle

\begin{abstract}
Suppose that $a$ and $b$ are multiplicatively independent Gaussian integers, that are
both
of modulus~$\geq \sqrt 5$. We prove that
there exist a $X\subset \mathbb Z[i]$ which is $a$-automatic but not $b$-automatic.
This settles a problem of Allouche,
Cateland, Gilbert, Peitgen, Shallit, and Skordev.
\end{abstract}

\noindent
\section{Introduction}
We assume that the reader is familiar with the theory of automatic
sequences as developed in~\cite{AutSeq}. We briefly recall some relevant notions.
A subset $X\subset \mathbb N$ is \textit{$b$-automatic} for a positive
integer $b>1$ if there exists a deterministic finite automaton (DFA)
that accepts all elements of $X$ and rejects all others. The numbers are fed to the
automaton by input strings $w$ that represent them in the
numeration system with digits $\{0,\ldots,b-1\}$ and base $b>1$.
$X$ is \textit{ultimately periodic} if there exist $p,N>0$ such that for
all $n>N$ we have that $n\in X$ if and only if $n+p\in X$.
Such subsets are $b$-automatic for all~$b$.
Two positive integers $a,b>1$ are \textit{multiplicatively dependent} if $a^r=b^s$ for positive
integers $r,s$. For such $a,b$ the notions of $a$-automaticity and $b$-automaticity coincide.
Suppose that $X$ is not ultimately periodic.
According to Cobham's theorem~\cite{cobham},
if $X$ is $a$-automatic and $b$-automatic
then $a$ and $b$ are multiplicatively dependent.
Over the years, this theorem has been extended to substitutive systems~\cite{durandrigo}
and iterative function systems~\cite{Charlier}.
Extending Cobham's theorem to general numeration systems, however, remains a challenge.
In this short note, we take a modest point of view and look
at what can be said if we replace the natural numbers by the Gaussian integers.

The notion of automatic sets was extended from $\mathbb N$ to arbitrary commutative semirings by Allouche,
Cateland, Gilbert, Peitgen, Shallit, and Skordev in \cite{Alloucheetal}. We state the full definition,
but we only consider the ring of Gaussian integers.
\begin{definition}
Let $R$ be a commutative semiring, let $b\in R$ and let $D\subset R$ be a finite subset that contains $0$.
Then $R$ is called a \emph{$(D,b)$-semiring} if every $r\in R\setminus\{0\}$ has a unique representation
\begin{equation}
r=r_sb^s+\ldots+r_1b+r_0,\ s\in\mathbb N,\ r_j\in D,\ 0\leq j\leq s,\ r_s\not=0.
\end{equation}
\end{definition}
The \emph{digit set} $D$ represents the residue classes modulo $b$, and the residue class
$0\text{ mod } b$ is represented by $0$.
Each element $r\in R$ corresponds to a word $w=r_sr_{s-1}\cdots r_0$ in the free monoid $D^*$. The initial
letter of $w$ is non-zero if $r$ is non-zero, i.e., $w\in (D\setminus\{0\})D^*$.
A subset $X\subset R$ is \emph{$(D,b)$-automatic} if there exists a DFA that accepts
$w$ if and only if it represents an element of $X$.

Hansel and Safer~\cite{Hansel} considered automatic subsets for bases $b=-k+i$, for $k\in\Z_{>0}$, with digit
set $D=\{0,1,\ldots,k^2\}$. They were able
to generalize of Cobham's theorem for these bases, under the assumption that the four exponentials
conjecture is true. In this note we prove that the set of powers $\{b^n\colon n\in\mathbb N\}$
is $a$-automatic if and only if $a$ and $b$ are multiplicatively dependent. We use this result
to answer a question from \cite{Alloucheetal}.  Note that automaticity is defined by means of
a digit set $D$, but that we speak about $a$-automaticity and $b$-automaticity without mentioning~$D$. We will explain below why this is so.

\section{Numeration systems of the Gaussian integers}
\noindent
Not all Gaussian integers $b$ can be used to represent $\mathbb Z[i]$ as a $(D,b)$ ring.
If $b$ is a unit then $Rb$ is equal to $R$ and there is only one residue class. 
In this case, $D$ can only contain $0$ and no other element, so a unit base $b$ is ruled out.
Also, if $b$ is equal to $1+i, 1-i$ or $2$, then there exist no digit set $D$ such
that the Gaussian integers are a $(D,b)$-ring. It is possible to circumvent this technical difficulty,
but this requires some notions of numeration systems that we do not want to go into.
We simply avoid it by assuming from now on that $b$ has modulus $|b|\geq \sqrt 5$ (note that this rules
out the bases $2,-2,1+i,1-i,-1+i,1-i$ which are all multiplicatively dependent).

\begin{theorem}[Davio, Deschamps, and Gossart~\cite{Davioetal}]
For every Gaussian integer $|b|\geq \sqrt 5$ there exists a $D$ such that $\mathbb Z[i]$
is a $(D,b)$-ring. In particular, we may take
\begin{equation}\label{eqD}
D=\left\{d\in \mathbb Z[i]\colon -\frac 12\leq \emph{Re}(d/b)<\frac 12,\ -\frac 12\leq \emph{Im}(d/b)<\frac 12\right\}
\end{equation}
\end{theorem}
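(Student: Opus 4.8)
The plan is to verify directly the two requirements in the definition of a $(D,b)$-ring: that $D$ is a complete set of residues modulo $b$, which makes a digit-extraction map well defined and forces uniqueness of expansions, and that the associated greedy algorithm terminates, which yields existence of a finite expansion with nonzero leading digit.

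First I would note that $D=\mathbb Z[i]\cap bS$, where $S=[-\tfrac12,\tfrac12)^2$ is the half-open unit square centred at $0$. Since $S$ is a fundamental domain for the translation action of $\mathbb Z[i]$ on $\mathbb C$, the set $bS$ is a fundamental domain for $b\mathbb Z[i]$; as $\mathbb Z[i]$ is a union of cosets of $b\mathbb Z[i]$, the intersection $\mathbb Z[i]\cap bS$ contains exactly one element of each such coset, that is, $D$ is a complete residue system modulo $b$ containing $0$. Hence for every $r\in\mathbb Z[i]$ there is a unique $\delta(r)\in D$ with $r\equiv\delta(r)\pmod b$; put $T(r)=(r-\delta(r))/b\in\mathbb Z[i]$. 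Iterating the identity $r=\delta(r)+bT(r)$ produces digits $r_j=\delta(T^jr)$, and uniqueness of any finite base-$b$ expansion of $r$ follows by induction on its length, since the lowest digit of such an expansion must equal $\delta(r)$, after which one passes to $T(r)$.

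The heart of the matter is that the orbit $r,T(r),T^2(r),\dots$ always reaches $0$. As $\delta(r)/b\in S$ we have $|\delta(r)|<|b|/\sqrt2$, hence $|T(r)|\le(|r|+|\delta(r)|)/|b|<|r|/|b|+1/\sqrt2$, so that $|T(r)|<|r|$ as soon as $|r|\ge\rho_0:=|b|/(\sqrt2\,(|b|-1))$; and since $|b|\ge\sqrt5$ forces $|b|>2$, one checks that $\rho_0<\sqrt2$. Because $|T(z)|<|z|$ whenever $z\in\mathbb Z[i]$ has $|z|\ge\rho_0$ and only finitely many Gaussian integers have modulus at most $|r|$, the orbit cannot stay forever in $\{z\in\mathbb Z[i]:|z|\ge\rho_0\}$, so it reaches $\{z\in\mathbb Z[i]:|z|<\rho_0\}\subseteq\{0,\pm1,\pm i\}$. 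Finally, $|b|\ge\sqrt5$ gives $1/|b|<1/2$, whence $\pm1,\pm i\in D$ and $T(\pm1)=T(\pm i)=T(0)=0$. Thus every orbit reaches $0$; reading off the digits $r_j=\delta(T^jr)$ and truncating at the last nonzero one yields a representation $r=\sum_{j=0}^s r_jb^j$ with $r_j\in D$ and $r_s\neq0$.

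I expect the termination step to be the only real difficulty. The modulus estimate itself is quick, but one must treat the small Gaussian integers carefully, and in particular the extreme case $|b|=\sqrt5$, where the trapping disc of radius $\rho_0\approx1.28$ is as large as it gets: the argument succeeds because even then this disc contains no Gaussian integer other than $0,\pm1,\pm i$, each of which is itself a digit and so maps to $0$ in one step. This is exactly where the hypothesis $|b|\ge\sqrt5$ is used: for smaller bases the trapping region grows beyond radius $\sqrt2$ and can contain a nontrivial periodic orbit of $T$ (for instance $T(1)=1$ when $b=1+i$), so the algorithm need not terminate, in keeping with the fact noted above that $1\pm i$ and $2$ admit no digit set at all. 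The remaining ingredients — the fundamental-domain argument, the well-definedness of $\delta$, and uniqueness — are routine.
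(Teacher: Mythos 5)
Your proof is essentially correct, and it is more self-contained than what the paper does: the paper does not prove this theorem at all, but attributes it to Davio--Deschamps--Gossart (with a written-out proof only in Krebs's thesis) and merely sketches the structure --- distinctness of residues is called immediate, completeness of the residue system is deduced from Pick's theorem (the square $bS$ has area $|b|^2$, hence contains $|b|^2$ lattice points, one per class), and the termination of $z\mapsto(z-d)/b$ is explicitly flagged as ``the strenuous part'' and left to the references. You replace Pick's theorem by the observation that $bS$ with $S=[-\tfrac12,\tfrac12)^2$ is a fundamental domain for $b\Z[i]$, which gives existence \emph{and} uniqueness of the digit in one stroke, and you actually carry out the termination argument via the contraction estimate $|T(r)|\le |r|/|b|+1/\sqrt2$, identifying the trapping set $\{0,\pm1,\pm i\}$ and checking these are digits mapped to $0$; your remark that $T(1)=1$ for $b=1+i$ correctly locates where $|b|\ge\sqrt5$ enters. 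One small slip: since the corner $\mathrm{Re}(d/b)=\mathrm{Im}(d/b)=-\tfrac12$ is allowed by the half-open conditions, a digit can have $|d|=|b|/\sqrt2$ exactly (this happens precisely when $1+i$ divides $b$), so your strict bound $|\delta(r)|<|b|/\sqrt2$ should be $\le$; this is harmless, because every Gaussian integer outside your trapping disc has modulus at least $\sqrt2>\rho_0$, so the strict decrease $|T(r)|<|r|$ still holds where you need it. With that adjustment your argument is a complete and valid proof of the cited theorem.
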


The original proof of this result appeared in a technical report, which is often quoted
but not easily accessible. Another proof
can be found in \cite{Krebs}, which is available online. It is immediately clear that all elements of
$D$ represent different residue classes modulo $b$. Pick's theorem~\cite{Pick} implies that $D$ contains all residue
classes. The strenuous part of the proof is checking that the operation $z\to (z-d)/b$ terminates,
where $d$ is the digit that represents $z\text{ mod }b$.

Suppose that $R$ is a $(D,b)$-ring as well as a $(D',b)$-ring. Two digit sets $D$ and $D'$ are \emph{linked} if there exists a finite $E\subset R$ containing $0$ such that
$D+E\subset D'+bE$. It is proved in~\cite{Alloucheetal} that this is an equivalence relation
and that any $(D,b)$-automatic set is a $(D',b)$-automatic set if $D$ and $D'$ are linked.
The following lemma shows that we may suppress mentioning the digit set,
if we consider automatic subsets of the Gaussian integers.

\begin{lemma}\label{lemma1}
Suppose
$\mathbb Z[i]$ is a $(D,b)$-ring and a $(D',b)$ ring for two different digit sets. Then $D$ and $D'$ are linked.
\end{lemma}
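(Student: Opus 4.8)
The plan is to exhibit a single finite set $E\subset\mathbb Z[i]$ containing $0$ with $D+E\subseteq D'+bE$; by the definition recalled above this is exactly what it means for $D$ and $D'$ to be linked, so nothing more is required. The set $E$ will be the set of all \emph{carries} that can arise when one rewrites a base-$b$ string over the digit set $D$ into one over $D'$, processing from the least significant digit upward.

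First I would isolate the only facts about $D$ and $D'$ that are used: each is a finite complete set of residues modulo $b$ and each contains $0$. (The $(D,b)$-ring hypothesis gives more — termination of the greedy expansion — but that plays no role here.) For $z\in\mathbb Z[i]$ write $\delta'(z)$ for the unique element of $D'$ with $z\equiv\delta'(z)\pmod b$, and $c'(z)=(z-\delta'(z))/b\in\mathbb Z[i]$ for the associated carry. Set $M=\max_{d\in D}|d|$, $M'=\max_{d'\in D'}|d'|$, and
\[
B=\frac{M+M'}{|b|-1},
\]
which is finite because $|b|\geq\sqrt5>1$. Define $E=\{e\in\mathbb Z[i]\colon |e|\leq B\}$; this is finite, being a bounded subset of the lattice $\mathbb Z[i]$, and it contains $0$.

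The crux is that $E$ is closed under forming carries: for every $d\in D$ and every $e\in E$ one has $c'(d+e)\in E$. Indeed,
\[
|c'(d+e)|=\frac{|d+e-\delta'(d+e)|}{|b|}\leq\frac{|d|+|e|+|\delta'(d+e)|}{|b|}\leq\frac{M+B+M'}{|b|}\leq B,
\]
the last inequality being exactly the defining relation for $B$ (rewritten as $M+M'+B\leq B|b|$). Granting this, the identity
\[
d+e=\delta'(d+e)+b\,c'(d+e),\qquad \delta'(d+e)\in D',\ c'(d+e)\in E,
\]
shows that $d+e\in D'+bE$ for all $d\in D$ and $e\in E$, i.e. $D+E\subseteq D'+bE$, so $D$ and $D'$ are linked.

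I do not expect a genuine obstacle; the only slightly delicate point — and the step I would present most carefully — is the existence of a finite carry-closed $E$, which is where the hypothesis $|b|\geq\sqrt5$ (really just $|b|>1$) enters: the carry map $z\mapsto(z-\delta'(z))/b$ contracts toward a bounded region, so the carry orbit generated from $0$ stays inside a fixed disk. Making this precise amounts to checking that the radius $B$ is a fixed point of the estimate $B\mapsto(M+M'+B)/|b|$, with the maxima $M,M'$ taken over the full finite digit sets (not over the carries), so that the closure property is self-consistent; everything else is formal.
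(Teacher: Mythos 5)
Your proof is correct and follows essentially the same route as the paper: take $E$ to be a disk of Gaussian integers large enough to absorb all carries arising when rewriting $D$-digits as $D'$-digits. The only difference is cosmetic — you use the fixed-point radius $(M+M')/(|b|-1)$, valid for any $|b|>1$, while the paper takes radius $\Delta+\Delta'$ and uses $|b|\geq\sqrt 5$.
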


\begin{proof}
Suppose that $\Delta=\max\{|d|\colon d\in D\}$ and that $\Delta'=\max\{|d'|\colon d'\in D'\}$.
Let $E$ be the set of all Gaussian integers within radius $\Delta+\Delta'$ of the origin.
For an arbitrary $d\in D$ and $e\in E$, let $d'\in D'$ be equal to $(d+e)\text{ mod }b$. Then
$|(d+e-d')/b|< (2\Delta+2\Delta')/|b|$ and so $(d+e-d')/b\in E$. It follows that
$D$ and $D'$ are linked.
\end{proof}

From now on we speak about $b$-automatic subsets of $\mathbb Z[i]$.
If $D$ is a digit set for $b$ then $D\cup bD\cup \cdots\cup b^{j-1}D$ is a digit set for $b^j$
which produces the same words $w$ to represent the Gaussian integers.
It follows that the notions of $b$-automaticity and $b^j$-automaticity are
equivalent.

Let $D$ be the digit set as in Theorem~\ref{lemma1}. Every $z\in\mathbb Z[i]$
corresponds to a word $w\in (D\setminus\{0\})D^*$. We denote the length of that word by $\ell(z)$.

\begin{lemma}\label{length}
There exists a constant $c>0$ such that $\ell(z)\leq k$ if $|z|\leq c\cdot|b|^k$.
\end{lemma}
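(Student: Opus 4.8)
The plan is to bound the number of digits $\ell(z)$ by iterating the ``strip off the last digit'' map and observing that it contracts absolute values until the orbit lands in a fixed bounded region, on which $\ell$ is automatically bounded.

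First, I would write $\Delta=\max\{|d|\colon d\in D\}$ and introduce the map $T\colon\mathbb Z[i]\to\mathbb Z[i]$ sending $z$ to $(z-d)/b$, where $d\in D$ represents $z\bmod b$. By construction of the numeration system the word of $z\ne 0$ is the word of $T(z)$ followed by the digit $d$, so $\ell(z)=1+\ell(T(z))$ (with $\ell(0)=0$). The triangle inequality gives $|T(z)|\le(|z|+\Delta)/|b|$, and iterating this together with $\sum_{j\ge 1}|b|^{-j}=1/(|b|-1)$ (valid since $|b|\ge\sqrt 5>1$) yields $|T^n(z)|\le|z|/|b|^n+\Delta/(|b|-1)$ for all $n\ge 0$.

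Next, I would set $R=1+\Delta/(|b|-1)$ and $M=\max\{\ell(w)\colon w\in\mathbb Z[i],\ |w|\le R\}$; this maximum is finite because the disc of radius $R$ contains only finitely many Gaussian integers and each of them has a finite representation since $\mathbb Z[i]$ is a $(D,b)$-ring. Now if $|z|\le|b|^n$ then $|T^n(z)|\le 1+\Delta/(|b|-1)=R$, hence $\ell(T^n(z))\le M$; applying $\ell(z)=1+\ell(T(z))$ repeatedly $n$ times — and noting that if the orbit reaches $0$ before step $n$ then $\ell(z)$ is only smaller — gives $\ell(z)\le n+M$.

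Finally, I would take $c=|b|^{-M}$. If $|z|\le c\cdot|b|^k=|b|^{k-M}$, then either $k\ge M$ and the previous step with $n=k-M$ gives $\ell(z)\le(k-M)+M=k$, or $k<M$ and then $|z|<1$ forces $z=0$, so $\ell(z)=0\le k$; in both cases $\ell(z)\le k$. The one point requiring care is the finiteness of $M$, i.e.\ that $\ell$ is bounded on bounded sets; this rests on termination of the algorithm $z\mapsto(z-d)/b$, which is precisely the content of the Davio--Deschamps--Gossart theorem and hence already available to us, so it poses no real obstacle here.
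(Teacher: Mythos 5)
Your proof is correct and follows essentially the same route as the paper: strip off the last digit, observe that each step contracts $|z|$ by a factor $|b|$ up to an additive constant, land after about $k$ steps in a fixed disc on which $\ell$ is bounded (finitely many Gaussian integers, each with a finite representation), and take $c=|b|^{-M}$. The only differences are cosmetic — you use the generic bound $\Delta=\max\{|d|\colon d\in D\}$ where the paper uses the special property $|\mathrm{Re}(d/b)|,|\mathrm{Im}(d/b)|\leq\tfrac12$ of the digit set in equation~(2), and you treat the small-$k$ edge case explicitly, which the paper leaves implicit.
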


\begin{proof}
For $r>0$ define $M(r)=\max\{\ell(z)\colon z\in\mathbb Z[i],\ |z|\leq r\}$, which is a maximum
over a finite set. For given $r$ let $z$ be a Gaussian integer such that $M(r)=\ell(z)$. Let $d_0\in D$
be the last digit in the word $w$ that represents $z$. Then
\[
M(r)=\ell(z)=\ell\left(\frac{z-d_0}b\right)+1\leq M\left(\left|\frac{z-d_0}b\right|\right)+1\leq M\left(\left|\frac{z}b\right|+\left|\frac{d_0}b\right|\right)+1.
\]
Since $d_0/b$ has real and imaginary parts bounded by $\frac 12$ in absolute value, we conclude that
\[
M(r)\leq M\left(\frac{r}{|b|}+1\right)+1.
\]
Iterating this inequality and using  our standing assumption that $|b|\geq\sqrt 5$ we find
\begin{eqnarray*}
M\left(|b|^k\right)&\leq& M\left(|b|^{k-1}+1\right)+1\\&\leq& M\left(|b|^{k-2}+\frac 1{|b|}+1\right)+2\leq\cdots\\&\leq&
M\left(1+\frac 1{|b|^{k-1}}+\frac 1{|b|^{k-2}}+\cdots+1\right)+k-1\\
&\leq& M(3)+k-1
\end{eqnarray*}
It follows that we can specify the constant $c$ as $|b|^{-M(3)}$.
\end{proof}

\section{Automatic subsets of the Gaussian integers}

There are several different proofs of Cobham's theorem on automatic subsets of $\mathbb N$, see~\cite{cobham,durandrigo, rigowaxweiler},
but they all
involve the multiplicative group $G=\{a^mb^n\colon m,n\in\Z\}$ on two generators $a,b\in\mathbb N$.
In particular, what is needed is the topological property that $G$ is a dense subset of $(0,\infty)$ if $a$ and $b$
are multiplicatively independent. We will use a topological property of $G$ as a subset of the complex plane.

\begin{lemma}
Let $a,b\in \Z[i]$ be of modulus $|a|,|b|>1$ and consider
$G=\{a^mb^n\colon m,n\in\Z\}$ as a subset of the complex plane.
Then $1\in G$ is an isolated point if and only if $a$ and $b$ are multiplicatively dependent.
\end{lemma}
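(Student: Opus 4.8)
The plan is to transfer the statement about the multiplicative group $G\subset\C^*$ to a statement about an additive subgroup of $\C\cong\R^2$ through the exponential covering $\exp\colon\C\to\C^*$. Fix branches of $\log a$ and $\log b$, write $v_1=\log a$, $v_2=\log b$, $v_3=2\pi i$, and set $\Lambda=\Z v_1+\Z v_2+\Z v_3\subset\C$. Then $\exp$ is a covering homomorphism with $\exp^{-1}(G)=\Lambda$ and kernel $2\pi i\,\Z\subset\Lambda$, and it restricts to a homeomorphism on the ball $B(0,\pi)$, which meets $2\pi i\,\Z$ only at $0$. Comparing neighbourhoods of $1$ in $G$ with neighbourhoods of $0$ in $\Lambda$ through this local homeomorphism shows that $1$ is isolated in $G$ if and only if $\Lambda$ is a discrete subgroup of $\R^2$. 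So the lemma reduces to showing that $\Lambda$ is discrete exactly when $a,b$ are multiplicatively dependent. I will also use the elementary fact that, since $|a|,|b|>1$, the pair $a,b$ is multiplicatively dependent if and only if $a^pb^q=1$ for some $(p,q)\in\Z^2\setminus\{0\}$ (taking moduli eliminates the degenerate cases); this is what translates multiplicative relations into $\Z$-linear relations among $v_1,v_2,v_3$.

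For the direction ``multiplicatively dependent $\Rightarrow$ $1$ isolated'', assume $a^r=b^s$ with $r,s\geq1$. The subgroup $\Lambda_0=\Z v_1+\Z v_3$ is a lattice, because $v_1$ and $v_3$ are $\R$-linearly independent ($\mathrm{Re}\,v_1=\log|a|\neq0$ while $v_3$ is purely imaginary), and two $\R$-independent vectors span a discrete subgroup of $\R^2$. The relation $a^r=b^s$ gives $rv_1-sv_2\in 2\pi i\,\Z=\Z v_3\subset\Lambda_0$, so $sv_2\in\Lambda_0$; since also $v_1,v_3\in\Lambda_0$, this yields $s\Lambda\subseteq\Lambda_0$, hence $\Lambda\subseteq\frac1s\Lambda_0$, which is again a lattice. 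A subgroup of a discrete group is discrete, so $\Lambda$ is discrete and $1$ is isolated in $G$.

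For the converse, assume $a,b$ multiplicatively independent and, for contradiction, that $\Lambda$ is discrete. A discrete subgroup of $\R^2$ is free of rank $\leq2$, so the three generators $v_1,v_2,v_3$ satisfy a nontrivial relation $pv_1+qv_2+tv_3=0$ with $(p,q,t)\in\Z^3\setminus\{0\}$; necessarily $(p,q)\neq(0,0)$, since $p=q=0$ would force $t=0$. Exponentiating gives $a^pb^q=1$ with $(p,q)\neq(0,0)$, contradicting multiplicative independence. Therefore $\Lambda$ is not discrete, so it accumulates at $0$: for each $\varepsilon\in(0,2\pi)$ there is $\lambda=mv_1+nv_2+kv_3\in\Lambda$ with $0<|\lambda|<\varepsilon$, and then $a^mb^n=\exp(mv_1+nv_2)=\exp(\lambda-kv_3)=\exp(\lambda)$ satisfies $|a^mb^n-1|\leq|\lambda|e^{|\lambda|}<\varepsilon e^{2\pi}$, while $a^mb^n\neq1$ because $0<|\lambda|<2\pi$ keeps $\lambda$ outside $2\pi i\,\Z=\ker\exp$. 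Letting $\varepsilon\to0$ yields points of $G\setminus\{1\}$ arbitrarily close to $1$, so $1$ is not isolated. I do not foresee a deep obstacle: the care is all in the dictionary $G\leftrightarrow\Lambda$ and in checking that the approximating group elements in the non-discrete case are genuinely $\neq1$; for the first implication one could alternatively observe directly that $G$ is then a finitely generated abelian group of rank $\leq1$ on which the modulus map is nontrivial, which forces $G$ to be discrete.
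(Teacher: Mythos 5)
Your proof is correct, but it takes a genuinely different route from the paper. You linearize the problem: lifting through $\exp$ you replace $G\subset\C^*$ by the additive group $\Lambda=\Z\log a+\Z\log b+\Z\,2\pi i\subset\R^2$ (the key identities $\exp^{-1}(G)=\Lambda$ and injectivity of $\exp$ on $B(0,\pi)$ make the dictionary sound), prove discreteness in the dependent case by trapping $\Lambda$ inside the scaled lattice $\tfrac1s(\Z\log a+\Z\,2\pi i)$, and in the independent case invoke the structure fact that a discrete subgroup of $\R^2$ is free of rank at most $2$, so the three generators admit a nontrivial $\Z$-relation, which exponentiates to $a^pb^q=1$ and contradicts independence; your explicit estimate $|e^\lambda-1|\leq|\lambda|e^{|\lambda|}$ and the check $0<|\lambda|<2\pi$ correctly handle the only delicate point, namely that the approximating elements of $G$ are genuinely distinct from $1$. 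The paper instead stays inside the topological group $\C^*$: it first notes that if $1$ is isolated then $G$ is closed and discrete; for the dependent direction it observes that $A=\{a^n\}$ is discrete of finite index in $G$, so $G$ is a finite union of discrete cosets; for the converse it intersects the closed discrete $G$ with the compact annulus $\{1\leq|z|\leq|a|\}$ to conclude that $G/A$ is finite, hence some $b^k$ is a power of $a$. Your approach buys an explicit, quantitative description of how relations among $\log a$, $\log b$, $2\pi i$ govern discreteness (at the cost of choosing logarithm branches and citing the classification of discrete subgroups of $\R^2$), while the paper's argument is branch-free and rests only on compactness and the coset decomposition; also note your conclusion $a^pb^q=1$ with $(p,q)\neq(0,0)$ still needs the small modulus argument (which you supply) to recover dependence in the paper's sense $a^r=b^s$ with $r,s>0$.
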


\begin{proof}
$G$ is a subgroup of the punctured place $\mathbb C^*=\mathbb C\setminus\{0\}$,
which is a multiplicative topological group.
If $1\in G$ is isolated then every point is isolated since $G$ is a group. Furthermore,
if $g_n$ is a sequence in $G$ which converges to $g\in \mathbb C^*$ that is not in $G$, then
$g_ng_m^{-1}$ converges to $1$ as $n,m\to\infty$, contradicting that $1\in G$ is isolated.
Therefore, if $1\in G$ is isolated, then $G$ is closed and discrete.

Suppose that $a$ and $b$ are multiplicatively dependent. Then
the subgroup $A=\{a^n\colon n\in\Z\}$ has finite index in $G$.
Since $A$ is discrete and $G$ us a finite union of cosets of $A$, we
conclude that $G$ is discrete and that $1\in G$ is isolated.

Now suppose that $1\in G$ is isolated.
Since $G$ is closed and discrete, it intersects the annulus $\left\{z\colon 1\leq |z|\leq |a|\right\}$ in a discrete and
closed subset. By compactness, this intersection is finite. Each residue class
of $G/A$ has a representative in $G\cap R_0$. Therefore $G/A$ is finite and we conclude that $a$ and $b$
are multiplicatively dependent.
\end{proof}

\noindent
If $1$ is not isolated, then there exists a
sequence $a^mb^n$ which converges to $1$ and $|n|,|m|\to \infty$. Since $|a|,|b|>1$ the signs of $n$ and $m$
are opposite. Therefore, there exist $n,m\in\N$ such that $\frac{a^m}{b^n}\to 1$.
In fact, if $1$ is not isolated, then none of the elements of $G$ are isolated. By the same
argument we find that for every $u\in G$ there exist $n,m\in\N$ such that $\frac{a^n}{b^m}\to u$.

\begin{lemma}\label{ab}
Let $a,b$ be multiplicatively independent Gaussian integers that generate the multiplicative
group $G$. Let $D$ be a digit set for $b$ as in equation~\ref{eqD}. Then for every
$u\in G\cap \mathbb Z[i]$ there exist
arbitrarily large $m,n\in\N$ such that $a^m=ub^n+z$ for $z\in\Z[i]$ with $\ell(z)\leq n$.
\end{lemma}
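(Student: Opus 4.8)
The plan is to derive the lemma directly from the topological observation recorded just after the previous lemma, together with the metric estimate of Lemma~\ref{length}. Since $a$ and $b$ are multiplicatively independent, that previous lemma tells us $1$ is not an isolated point of $G$; and then, as noted in the paragraph following it, for every $u\in G$ there is a sequence of pairs of natural numbers $(m_k,n_k)$ with $m_k,n_k\to\infty$ and $a^{m_k}/b^{n_k}\to u$. I would first make sure this is exactly what that paragraph delivers: starting from a relation $a^{p_j}b^{q_j}\to 1$ with $|p_j|,|q_j|\to\infty$, comparison of moduli together with $|a|,|b|>1$ forces the exponents to have opposite signs, and absorbing the fixed word representing $u=a^pb^q$ into the exponents keeps both of them positive and unbounded.

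Granting this, fix the constant $c>0$ from Lemma~\ref{length}. Since $a^{m_k}/b^{n_k}\to u$, for all sufficiently large $k$ we have $\left|a^{m_k}/b^{n_k}-u\right|\leq c$. Set $z_k=a^{m_k}-ub^{n_k}$. Because $a\in\Z[i]$ and, by hypothesis, $u\in\Z[i]$, this $z_k$ is a Gaussian integer, and by construction $a^{m_k}=ub^{n_k}+z_k$. Multiplying the previous inequality by $|b|^{n_k}$ gives $|z_k|\leq c\,|b|^{n_k}$, whence Lemma~\ref{length} (applied with exponent $n_k$) yields $\ell(z_k)\leq n_k$. Renaming $(m_k,n_k,z_k)$ as $(m,n,z)$ and recalling that $m_k,n_k\to\infty$, we obtain arbitrarily large $m,n\in\N$ with $a^m=ub^n+z$ and $\ell(z)\leq n$, as required.

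I do not expect a genuine obstacle here: the lemma is essentially a dictionary entry translating ``$a^m/b^n$ approximates $u$'' into ``$a^m-ub^n$ is represented by a short $b$-adic word''. The two points that deserve care are that the approximating sequence really does have both exponents tending to infinity — this is where $|a|,|b|>1$ enters, via the modulus comparison above — and that $u$ must be a Gaussian integer, so that $z$ lies in $\Z[i]$ and Lemma~\ref{length} is applicable; the latter is exactly why the hypothesis restricts attention to $u\in G\cap\Z[i]$.
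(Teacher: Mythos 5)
Your proposal is correct and follows essentially the same route as the paper: use the non-isolation of $1$ in $G$ (and hence the existence of $m,n\to\infty$ with $a^m/b^n\to u$) to get $\left|a^m/b^n-u\right|<c$ for the constant $c$ of Lemma~\ref{length}, and then apply that lemma to $z=a^m-ub^n$ after scaling by $|b|^n$. Your added checks (opposite signs of the exponents, and $u\in\Z[i]$ ensuring $z\in\Z[i]$) are exactly the points the paper leaves implicit, so there is nothing to correct.
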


\begin{proof}
Since there exist sequences of natural numbers $m$ and $n$ such that $\frac{a^m}{b^n}$ converges to $u$,
there exist arbitrarily large $m,n\in\N$ such that $\left|\frac{a^m}{b^n}- u\right|<c$, with $c$ as
in Lemma~\ref{length}. By this lemma, $\ell(z)\leq n$ for $z=a^m-ub^n$.
\end{proof}

This lemma should be read as follows: considering the numeration system with base~$b$,
suppose that the word $w$ represents a Gaussian integer in $G$. Then there exist an arbitrarily large power $a^m$ that is
represented by a word $v$ that has $w$ as a prefix (which we denote by $w\sqsubset v $).

\begin{theorem}\label{indep}
The set $\{a^n\colon n\in\N\}$ is not $b$-automatic
if $a,b\in\Z[i]$ are multiplicatively independent.
\end{theorem}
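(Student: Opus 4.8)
The plan is to argue by contradiction. Suppose $X=\{a^n:n\in\N\}$ is accepted by a DFA $\mathcal A$ with finite state set $Q$. For a word $w$ over the digit set $D$, let $q(w)\in Q$ denote the state $\mathcal A$ reaches on reading $w$ from its initial state, and for $z\in\Z[i]$ let $w(z)$ be the base-$b$ word representing $z$. The powers $a^0,a^1,a^2,\dots$ all lie in $G\cap\Z[i]$ and are pairwise distinct since $|a|>1$; restricting to a tail $\{a^j:j\ge J_0\}$, where $J_0$ is large enough that $|a^{J_0}|$ exceeds the constant $\Delta'=\Delta/(|b|-1)$ with $\Delta=\max\{|d|:d\in D\}$, pigeonhole produces two indices $j_1\neq j_2$ with $q(w(a^{j_1}))=q(w(a^{j_2}))$. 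Write $u_1=a^{j_1}$, $u_2=a^{j_2}$, $\delta=u_2-u_1\neq 0$; note $|u_1|,|u_2|>\Delta'$.

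Next I apply Lemma~\ref{ab} with $u=u_1$: for arbitrarily large $n$ there is an $m$ with $a^m=u_1b^n+z$ and $\ell(z)\le n$. Letting $t$ be the length-$n$ word of $z$ (zero-padded on the left), uniqueness of $(D,b)$-representations gives $w(a^m)=w(u_1)\cdot t$. Since $q(w(u_1))=q(w(u_2))$, also $q(w(u_2)\cdot t)=q(w(u_1)\cdot t)=q(w(a^m))$, an accepting state; as $w(u_2)$ begins with a nonzero digit, $w(u_2)\cdot t$ is a genuine representation word and is accepted, so it represents some $a^{m'}\in X$. Reading off its value, $a^{m'}=u_2b^n+z=a^m+\delta b^n$.

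The crucial estimate is that $m'-m$ stays bounded as $n\to\infty$. From $\ell(z)\le n$ one gets $|z|\le\Delta'|b|^n$, and since $|u_1|,|u_2|>\Delta'$, the identities $a^m=u_1b^n+z$ and $a^{m'}=u_2b^n+z$ force both $|a^m|$ and $|a^{m'}|$ to lie between fixed positive multiples of $|b|^n$. Hence $|a|^{m'-m}=|a^{m'}|/|a^m|$ lies in a compact subinterval of $(0,\infty)$ not depending on $n$, and because $|a|>1$ this confines $m'-m$ to a finite set $S$; moreover $0\notin S$ since $a^{m'}-a^m=\delta b^n\neq 0$. For each admissible $n$ set $k=k(n)\in S$; then $a^m(a^k-1)=\delta b^n$ with $a^k-1\neq 0$ ($a$ is not a root of unity), so $a^m=\delta b^n/(a^k-1)$. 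Some $k_0\in S$ recurs for two distinct admissible values $n_1\neq n_2$ with associated $m_1,m_2$; dividing the corresponding identities yields $a^{m_1-m_2}=b^{n_1-n_2}$. Since $n_1\neq n_2$ and $|b|>1$ we get $m_1\neq m_2$, and as $|a|,|b|>1$ the two exponents have the same sign, so $a^r=b^s$ for positive integers $r,s$, contradicting the multiplicative independence of $a$ and $b$.

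The one genuinely delicate point is the third paragraph: obtaining a two-sided bound $c_1|b|^n\le|a^m|,|a^{m'}|\le c_2|b|^n$ with $0<c_1\le c_2$, which is what pins $m'-m$ down to finitely many values and makes the final descent to multiplicative dependence possible; this is exactly why one restricts to a tail so that $|u_1|,|u_2|>\Delta'$. Everything else reduces to pigeonhole, uniqueness of base-$b$ representations, and a direct invocation of Lemma~\ref{ab}.
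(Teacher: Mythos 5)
Your proof is correct, and it takes a genuinely different route from the paper's, even though both arguments ultimately rest on Lemma~\ref{ab}. The paper anchors everything at the prefix $10^{|Q|}$: it first chooses $a^{q_0}=b^{p_0}+z_0$ with $\ell(z_0)\leq p_0-|Q|$, uses Lemma~\ref{ab} repeatedly to build an infinite nested chain $w_0\sqsubset w_1\sqsubset\cdots$ of representations of powers of $a$, and then pumps blocks of zeroes inside the prefix $10^{|Q|}$; this yields two powers $a^{q}=b^{p}+z$ and $a^{q'}=b^{p}+z'$ with $z\neq z'$, and further pumping gives infinitely many pairs of powers of $a$ with the same fixed nonzero difference $z-z'$, which is impossible since the equation $a^{q}-a^{q'}=z-z'$ has only finitely many solutions. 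You instead run a prefix-exchange (Myhill--Nerode style) argument: two distinct powers $u_1,u_2$ of $a$ driving the automaton to the same state, Lemma~\ref{ab} applied to $u_1$ to write $a^m=u_1b^n+z$ with $\ell(z)\leq n$ for infinitely many $n$, and the state swap to conclude $a^{m'}=u_2b^n+z$, so that $a^{m'}-a^m=\delta b^n$ with $\delta=u_2-u_1$ fixed; the two-sided estimate $c_1|b|^n\leq |a^m|,|a^{m'}|\leq c_2|b|^n$ (valid because $|u_1|,|u_2|>\Delta'$ while $|z|\leq\Delta'|b|^n$) confines $m'-m$ to a finite set, and a second pigeonhole over that set plus division of the two identities $a^{m_i}(a^{k_0}-1)=\delta b^{n_i}$ gives $a^{m_1-m_2}=b^{n_1-n_2}$ with exponents of equal sign, i.e.\ multiplicative dependence directly. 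What your route buys is a cleaner endgame: no pumping inside the prefix, no nested chain of words, and the contradiction is literally $a^r=b^s$ rather than a finiteness claim about solutions of $a^{q}-a^{q'}=z-z'$; the price is the growth estimate you flag, which is correct as written and is exactly where the restriction $|u_1|,|u_2|>\Delta'$ is needed. The paper's version needs only the cruder fact that a fixed nonzero Gaussian integer is a difference of two powers of $a$ in at most finitely many ways, but pays for it with the more elaborate word combinatorics. Both arguments are elementary and of comparable length, and both use the digit set of equation~\ref{eqD} only through the bounds on digit size, so either is compatible with Lemma~\ref{lemma1}.
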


\begin{proof}
We adopt the digit set as in equation~\ref{eqD} and consider $\mathbb Z[i]$
as a $(D,b)$-ring.
Arguing by contradiction,
suppose that $A=(Q,q_0,D,\delta,F)$ is a
deterministic finite automaton DFA that accepts the words in $(D\setminus\{0\})D^*$
that correspond to the set $\{a^n\colon n\in\N\}$, and rejects all others.
Since $1$ is not isolated in $G=\{a^nb^m\colon n,m\in\mathbb Z\}$ there exist arbitrarily large
$p_0$ and $q_0$ such that
\[\left|\frac{a^{q_0}}{b^{p_0}}-1\right|\leq c|b|^{-|Q|}\]
where $c$ is as in Lemma~\ref{length}. Choose $p_0$ and $q_0$ such that $p_0>2|Q|$.
Then $a^{q_0}=b^{p_0}+z_0$ for $|z_0|\leq c|b|^{p_0-|Q|}$, which implies that $\ell(z_0)\leq p_0-|Q|$.
Therefore $a^{q_0}$ is represented by a word $w_0$
of length $p_0+1$ that has prefix $10^{|Q|}$.
Note that $w_0$ is not equal to $10^{p_0}$ since $a$ and $b$ are multiplicatively independent.

By applying the previous lemma we find that there is an arbitrarily large power $a^{q_1}$ that
is represented by $w_1$ that has $w_0$ as a prefix.
Repeating this ad infinitum we find an infinite sequence $w_0 \sqsubset w_1 \sqsubset w_2 \sqsubset\ldots$ of words
representing increasing powers $a^{q_0},a^{q_1},a^{q_2},\ldots$.
The initial word $w_0$ has prefix $10^{|Q|}$, and so all the words have this prefix.
This means that $a^{q_j}=b^{p_j}+z_j$ for $\ell(z_j)\leq p_j-|Q|$.
Since the DFA has $|Q|$ states, if we feed $10^{|Q|}$ to the automaton,
then it visits the same state twice. Hence there exist $0\leq s<t\leq |Q|$ such that
$10^s$ and $10^t$ end up in the same state and so do $10^sw$ and $10^tw$ for any word~$w$.
In particular $10^{|Q|}=10^s0^{|Q|-s}$ and $10^{|Q|+t-s}=10^t0^{|Q|-s}$ end up in the same state.
If we pump a multiple of $t-s$ zeroes into the prefix $10^{|Q|}$ of the word $w_j$,
then the final state in the DFA remains the same.
Since the DFA accepts only powers of $a$, pumping these additional zeroes into the prefix must replace $a^{q_j}$
by a higher power of $a$.

Choose indices $j<k$ such the two words $w_j$ and $w_k$ have the same length modulo $t-s$. Pump the appropriate multiple
of $t-s$ zeroes into $w_j$, so that the pumped up word has the same length as $w_k$. These two words cannot be the same since
$w_0$ is a prefix of $w_k$ but it is not a prefix of the pumped up $w_j$.
We thus obtain two words $u,v$ of equal length, say $p$, representing different powers $a^{q},a^{q'}$, both having prefix $10^{|Q|}$.
In particular $a^q=b^p+z$ and $a^{q'}=b^p+z'$ for some $z$ and $z'$  such that $z\not = z'$
and $\max\{\ell(z),\ell(z')\}\leq p-|Q|$. Now we can repeat the
construction and we can pump multiples of $t-s$ zeroes into the prefixes of $u$ and $v$. This produces an infinite
sequence of powers of $a$ that are equal to $b^{p+(t-s)n}+z$ and $b^{p+(t-s)n}+z'$. The differences of these powers
all solve the equation
\[a^q-a^{q'}=z-z'\]
in which $z-z'\not=0$. Clearly, this is impossible since this
equation has only finitely many solutions. We have reached a contradiction and we
conclude that $\{a^n\colon n\in\N\}$ is not $b$-automatic.
\end{proof}

As an immediate corollary, we can answer a question of \cite{Alloucheetal} whether a set that is $(-1+ki)$-automatic
for $k>1$ is necessarily $n$-automatic for some $n\in\mathbb N$. The answer is negative, since $-1+ki$ and $n$ are
multiplicatively independent for all $n>1$.

\begin{corollary}
For each pair of multiplicatively independent $a,b$ of modulus $>1$ there exists a subset of the Gaussian integers which is $a$-automatic but which
is not $b$-automatic.
\end{corollary}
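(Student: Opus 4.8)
The plan is to take $X=\{a^{n}\colon n\in\N\}$, which is exactly the set treated in Theorem~\ref{indep}; that theorem already supplies the harder half---$X$ is not $b$-automatic---so only the $a$-automaticity of $X$ remains. For this I would note that, since $|a|\geq\sqrt5$, we have $|1/a|\leq 1/\sqrt5<1/2$, so both the real and the imaginary part of $1/a$ have absolute value $<1/2$ and hence the digit $1$ satisfies the defining condition of the digit set $D$ for the base $a$ given by~\eqref{eqD}; thus $1\in D$. Then $a^{n}=1\cdot a^{n}$ is represented by the word $10^{n}$, and by uniqueness of representations in base $a$ the words coding elements of $X$ form precisely the regular language $10^{*}$, so $X$ is $a$-automatic. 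Together with Theorem~\ref{indep} this settles the corollary whenever $|a|,|b|\geq\sqrt5$.

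For $a,b$ only assumed of modulus $>1$ I would first reduce to that case. The Gaussian integers of modulus in $(1,\sqrt5)$ are precisely $\pm1\pm i$, $\pm2$ and $\pm2i$, and each of these is multiplicatively dependent on $2$ (for instance $(1+i)^{8}=2^{4}$, $(2i)^{4}=2^{4}$ and $(-2)^{2}=2^{2}$), so any two of them are multiplicatively dependent; in particular a multiplicatively independent pair cannot have both moduli $<\sqrt5$. If $1<|b|<\sqrt5$, then $|a|\geq\sqrt5$, so $X=\{a^{n}\colon n\in\N\}$ is $a$-automatic by the previous paragraph, whereas if $X$ were $b$-automatic it would be $b^{4}$-automatic, contradicting Theorem~\ref{indep} since $|b^{4}|\geq\sqrt5$ and $a,b^{4}$ are still multiplicatively independent. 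If instead $1<|a|<\sqrt5\leq|b|$, I would take $X=\{a^{4n}\colon n\in\N\}=\{(a^{4})^{n}\colon n\in\N\}$: since $|a^{4}|\geq\sqrt5$ this set is $a^{4}$-automatic by the previous paragraph, hence $a$-automatic, and it is not $b$-automatic by Theorem~\ref{indep} applied to the multiplicatively independent pair $(a^{4},b)$.

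I do not expect a genuine obstacle: all the substance is in Theorem~\ref{indep}, which is already in hand, and the $a$-automaticity of $\{a^{n}\}$ is a routine regular-language observation. The one point needing care is the passage between $z$-automaticity and $z^{j}$-automaticity---and indeed the meaning of $z$-automaticity itself---for the finitely many bases $z$ with $1<|z|<\sqrt5$ excluded in Section~2; for such $z$ one works with the power $z^{j}$ of modulus $\geq\sqrt5$ and invokes the numeration-system convention alluded to there. So I expect the write-up to be short, with Theorem~\ref{indep} doing the real work.
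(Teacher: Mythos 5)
Your proposal is correct and follows essentially the same route as the paper: take $X=\{a^n\colon n\in\N\}$, get non-$b$-automaticity from Theorem~\ref{indep}, observe that $X$ is represented by the regular language $10^*$ in base $a$, and dispose of the bases of modulus in $(1,\sqrt 5)$ by noting they are all multiplicatively dependent on $2$ and passing to a power of modulus $\geq\sqrt 5$ (relying, as the paper does, on the conventions of Allouche et al.\ for bases admitting no digit set). In fact your write-up is more explicit than the paper's brief remark --- it verifies $1\in D$, spells out the reduction via fourth powers, and correctly includes $\pm 2i$ among the exceptional bases.
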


\begin{proof}
We have restricted our attention to bases of modulus $\geq \sqrt 5$. The remaining Gaussian integers are of the form
$\pm1\pm i$ and $\pm 2$. These remaining bases are all multiplicatively dependent to $2$ and have been analysed in \cite{Alloucheetal}. Sets
that are automatic for such bases are equivalent to $2$-automatic sets.
\end{proof}

Cobham's theorem for $\mathbb N$ states that if $X$ is $a$-automatic and $b$-automatic for multiplicatively
independent $a$ and $b$, then $X$ is $c$-automatic for all natural numbers $c$ (even including $c=1$).
Our final result shows that for the Gaussian integers this statement is not true, which indicates
that some care is required if one
wants to extend Cobham's theorem to the Gaussian integers, assuming that this is possible.

\begin{theorem}\label{nats}
$\Z\subset\Z[i]$ is $b$-automatic if and only if
$b^j\in\N$ for some $j\in\N$.
\end{theorem}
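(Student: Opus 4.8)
The plan is to treat the two implications separately, with essentially all the work in the ``only if'' direction. First, if $b^j\in\N$ for some $j\geq 1$: since $b$-automaticity and $b^j$-automaticity coincide, I may assume the base is a natural number $n\geq\sqrt 5$. Applying the digit set of equation~\ref{eqD} to the real base $n$ gives $D=\{x+yi\colon -n/2\le x<n/2,\ -n/2\le y<n/2\}$, and because $n$ is real the reduction step $z\mapsto (z-d)/n$ acts separately on the real and imaginary parts of $z$. Hence a Gaussian integer lies in $\Z$ exactly when every digit of its base-$n$ word is real, so the words representing elements of $\Z$ form the regular language $(D\cap\R\setminus\{0\})(D\cap\R)^*$ (together with the empty word); thus $\Z$ is $n$-automatic, and Lemma~\ref{lemma1} shows the choice of digit set is immaterial.

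For the converse I would argue by contraposition, assuming $b^j\notin\N$ for all $j\geq1$. First a reduction: $b^j\in\N$ for some $j\geq1$ if and only if $b/|b|$ is a root of unity. Indeed $(b/|b|)^2=b^2/|b|^2$ lies in $\Q(i)$ because $b^2\in\Z[i]$ and $|b|^2\in\N$, and the only roots of unity in $\Q(i)$ are $\pm1,\pm i$, so $b/|b|$ has order dividing $8$; running through the cases --- $b$ real, $b$ purely imaginary, or $b=k(\pm 1\pm i)$ --- one checks that a suitable power of $b$ always lands in $\N$, and conversely $b^j\in\R_{>0}$ forces $(b/|b|)^j=1$. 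So the hypothesis says $b/|b|$ is not a root of unity, i.e.\ $\arg b\notin\pi\Q$, and hence the orbit $\{\,j\arg b\bmod 2\pi : j\in\N\,\}$ is dense in $[0,2\pi)$.

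Now suppose some DFA with state set $Q$ accepts exactly the base-$b$ words (for the digit set of equation~\ref{eqD}, permissible by Lemma~\ref{lemma1}) of the elements of $\Z$. Set $m:=j-|Q|$; by density of $\{\,j\arg b\,\}$ near $0$ and Lemma~\ref{length} I may choose $j\geq|Q|$ so that $z_0:=-i\,\mathrm{Im}(b^j)$ satisfies $|z_0|\le c|b|^{m}$, whence $\ell(z_0)\le m$. Then the word $w=1\,0^{|Q|}\,v$, where $v$ is the length-$m$ expansion of $z_0$, represents $b^{|Q|+m}+z_0=\mathrm{Re}(b^j)\in\Z\setminus\{0\}$ and is therefore accepted. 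While reading the prefix $1\,0^{|Q|}$ the automaton revisits a state, so there is some $\delta\geq1$ such that $1\,0^{|Q|+k\delta}\,v$ is accepted for all $k\geq0$; each such word represents $b^{|Q|+m+k\delta}+z_0$, which is then forced to be real, so $\mathrm{Im}(b^{|Q|+m+k\delta})=-\mathrm{Im}(z_0)$ is independent of $k$. But $|\mathrm{Im}(b^{|Q|+m+k\delta})|=|b|^{|Q|+m+k\delta}\,|\sin((|Q|+m+k\delta)\arg b)|$, and by density there are arbitrarily large $k$ for which that sine exceeds $\tfrac12$ in absolute value, so this quantity exceeds $\tfrac12|b|^{|Q|+m+k\delta}\to\infty$ --- a contradiction.

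The routine parts are the digit-set bookkeeping in the first step and the remark that $1\,0^{|Q|}\,v$ and its pumped variants are genuine canonical base-$b$ words, as they begin with the nonzero digit $1$ and uniqueness of representations then applies. The ``only if'' argument is a streamlined form of the pumping used in Theorem~\ref{indep}; in fact a single word $w$ suffices here. I expect the one genuinely non-mechanical point to be the reduction that $b^j\in\N$ for some $j$ is equivalent to $b/|b|$ being a root of unity, since everything afterwards merely exploits that a $b/|b|$ that is not a root of unity makes $\{\,j\arg b\bmod 2\pi\,\}$ dense, and this is exactly what destroys the forced constancy of $\mathrm{Im}(b^{|Q|+m+k\delta})$.
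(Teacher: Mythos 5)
Your proof is correct, and while the \enquote{if} half coincides with the paper's, your \enquote{only if} half takes a genuinely different route. The paper picks an auxiliary integer base $a>1$ (automatically multiplicatively independent of $b$ under the hypothesis $b^j\notin\N$ for all $j$), uses Lemma~\ref{ab} to find a power $a^q$ whose base-$b$ word has prefix $10^{|Q|}$, pumps zeroes, and finishes algebraically: the pumped values $b^{p+k}+z$ and $b^{p+2k}+z$ are real, so $b^{p+k}-b^{p}$ and $b^{p+2k}-b^{p+k}$ are rational integers, and their quotient gives $b^k\in\Q$, hence $b^k\in\Z$ since $b$ is an algebraic integer --- contradiction. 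You avoid Lemma~\ref{ab} and the auxiliary base entirely: you first characterize the exceptional bases ($b^j\in\N$ for some $j$ iff $b/|b|$ is a root of unity, via the roots of unity of $\Q(i)$), so under the hypothesis $\arg b\notin\pi\Q$; density of the rotation orbit together with Lemma~\ref{length} then lets you exhibit the real integer $\mathrm{Re}(b^j)=b^j-i\,\mathrm{Im}(b^j)$ whose canonical word is $1\,0^{|Q|}v$, and pumping forces $\mathrm{Im}(b^{j+k\delta})$ to be constant in $k$, contradicting the growth of $|b|^{j+k\delta}\,|\sin((j+k\delta)\arg b)|$ at the infinitely many $k$ where the sine factor exceeds $\tfrac12$. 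Your route is self-contained (only Lemma~\ref{length}, uniqueness of representations, and elementary facts about $\Q(i)$ are used) and makes the arithmetic content of the theorem explicit; the paper's route is shorter because it recycles Lemma~\ref{ab} and ends with the two-line difference/quotient trick. One small point you should make explicit: at the end you need density along the arithmetic progression $j+k\delta$, which holds because $\delta\arg b$ is again an irrational multiple of $2\pi$; alternatively you could close exactly as the paper does, taking $k=0,1,2$, subtracting and dividing to conclude $b^\delta\in\Z$, which dispenses with that density statement altogether.
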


\begin{proof}
We consider the digit set of equation \ref{eqD}.
We remark that if $b=2k+1$ is an odd natural number, then the real digits in $D$
are $\{-k,-k+1,\ldots, k-1, k\}$. They form a well studied numeration system for $\Z$,
 in particular if $b=3$, see~\cite{AutSeq}.
For a base $b\in\mathbb N$ the elements of $\Z$ correspond to words that consist
of real digits only (including the empty word, which represents zero).
So $\Z$ is $b$-automatic if $b\in \N$.
Since we know that the notions of $b$-automaticity and $b^j$ automaticity are equivalent,
we may assume that $b^j\not\in\N$ for all $j\in\N$.

Arguing by contradition,
suppose that $A=(Q,q_0,D,\delta,F)$ is a DFA that accepts all words that
represent the integers.
Choose any natural number $a>1$. Then
$a$ and $b$ are multiplicatively independent and all powers of
$a$ are accepted by our DFA. By Lemma~\ref{ab} there exists a power $a^q$ that
is represented by a word $w$ with prefix $10^{|Q|}$.
As before, there exists a $k>0$ such that if we pump an arbitrary multiple of
$k$ zeroes into the prefix of $w$, then the resulting word again gets accepted by the DFA.
In particular, $a^q=b^p+z$ for $\ell(z)<p$ such that $b^{p+k}+z$ and $b^{p+2k}+z$ are
all real integers.
Taking differences we conclude that $b^{p+k}-b^p$ and $b^{p+2k}-b^{p+k}$ are real integers.
Taking quotients, we conclude that $b^k\in\Q$ and since $b$
it is an algebraic integer, $b^k\in\Z$. Which contradicts our assumption.
\end{proof}

\section{Concluding remarks}

Before Cobham proved his theorem, B\"uchi~\cite{Buchi} proved
that the set $\{a^n\colon n\in\N\}$ is $b$-automatic if and only
if $a$ and $b$ are multiplicatively dependent. Our note thus
extends B\"uchi's result from $\mathbb N$ to $\Z[i]$.
The results in our note represent only a part of the MSc thesis of the third author~\cite{Krebs}.
In that thesis it is also shown how to deal with automaticity for unary bases
or exotic numeration systems.

The topological property
of the group $G$ that we have used is elementary. This should be contrasted to the
four exponentials conjecture, which is used by Hansel and Safer
to show that $G$ is dense in the complex plane
for the special bases $a$ and $b$ that they considered in \cite{Hansel}. Since we have no need for such
a deep conjecture, this gives some
hope that a generalization of Cobham's theorem to the Gaussian integers can be achieved with the same modest means as
Cobham used in~\cite{cobham}.

We end with a question, which is a variation on the question in~\cite{Alloucheetal} that we solved.
Suppose that $X\subset \Z[i]$ is closed under complex conjugation and
that it is $b$-automatic for some $b\in \Z[i]$. Is it true that $X$ is $a$-automatic
for some $a\in\mathbb N$?


\begin{thebibliography}{10}

\bibitem{Alloucheetal} {\sc J.-P. Allouche, E. Cateland, W.J. Gilbert, H.-O. Peitgen, J.O. Shallit, G. Skordev},
{\em Automatic maps in exotic numeration systems}, Theory Comput. Syst. {\bf 30}
(1997), 258--331.

\bibitem{AutSeq} {\sc J.-P. Allouche, J.O. Shallit}, {\em Automatic sequences}, Cambridge University press, 2003.

\bibitem{Buchi} {\sc J.R. B\"uchi},
{\em Weak second-order arithmetic and finite automata},
Z. Math. Logik Grundlagen Math. 6 (1960), 66--92.

\bibitem{Charlier} {\sc \'E. Charlier, J. Leroy, M. Rigo}
{\em An analogue of Cobham's theorem for iterated function systems},
Adv. Math. 280 (2015), 86--120.

\bibitem{cobham} {\sc A. Cobham}, {\em On the base-dependence of sets of
numbers recognizable by finite automata}, Math.~Systems Theory, {\bf 3} (1969),
186--192.

\bibitem{Davioetal} {\sc M. Davio, J.P. Deschamps, and C. Gossart}, {\em Complex arithmetic}, Technical Report
R369, MBLE Research Laboratory, Brussels, Belgium, May 1978.

\bibitem{durandrigo} {\sc F. Durand}, 
{\em Cobham's theorem for
substitutions}, J. Eur. Math. Soc. 13 (2011), 1799--1814.

\bibitem{Hansel} {\sc G. Hansel, T. Safer}, {\em Vers un th\a'{e}or\a`{e}me de Cobham pour les entiers
de Gauss}, Bull. Belg. Math. Soc. Simon Stevin \textbf{10}, vol. 5 (2003), 723--735.

\bibitem{Krebs} {\sc T. Krebs}, {\em Automatic maps on the Gaussian integers}, MSc thesis, TU Delft, 2013.
\href{http://repository.tudelft.nl/islandora/object/uuid\%3A7bec892b-9fc7-4349-bbbf-bbfd4c859f19?collection=education}
{\small{\tt http://repository.tudelft.nl}}

\bibitem{Pick} {\sc G. Pick}, {\em Geometric results on number theory} (Geometrisches zur Zahlenlehre), Sitzungsberichte des Deutschen
Natur\-wissenschaftlich-Medicinischen Vereines f\"ur B\"ohmen \enquote{Lotos} in Prag, nr. 8, 9 (1899), 312--319.
\href{https://archive.org/stream/cbarchive_47270_geometrischeszurzahlenlehre-1906/geometrischeszurzahlenlehre-1906#page/n1/mode/2up}
{\tt http://www.biodiversitylibrary.org/item/50207}

\bibitem{rigowaxweiler} {\sc M. Rigo, L. Waxweiler}, {\em A note on syndeticity,
recognizable sets and Cobham's theorem},
Bull.~European Assoc.~Theor.~Comput.~Sci.~{\bf 88} (2006), 169--173.
\end{thebibliography}
\end{document}